\newcolumntype{.}{D{.}{. }{4}}
\newtheorem{propo}{Proposition}
\theoremstyle{definition}
\newtheorem*{dfn}{Definition}
\newtheorem*{asp}{Assumption}
\journalname{ }
\begin{document}
\title{
Nonlinear Domar aggregation over transforming production networks
}
\titlerunning{ }        
\author{Satoshi Nakano \and Kazuhiko Nishimura}
\institute{
Satoshi Nakano \\ 0000-0003-0424-585X 
 \at Nihon Fukushi University, Tokai, 477-0031, Japan \\
\email{nakano@n-fukushi.ac.jp}
\and 
Kazuhiko Nishimura (Corresponding Author) \\ 0000-0002-7400-7227 
 \at Chukyo University, Nagoya, 466-8666, Japan \\
\email{nishimura@lets.chukyo-u.ac.jp}
}
\date{\today}

\maketitle
\begin{abstract}
An economy-wide production network, manifested through monetary input--output coefficients, inherently destabilizes during the general equilibrium propagation of sectoral productivity shocks when substitution elasticities are non-neutral. This study explores the global properties of such networks by mapping the non-linear price manifold into a linearized \textit{transcendent space}. Within this framework, we identify the emergence of network \textit{singularities}, identifying the metabolic thresholds where productivity declines lead to supply-chain paralysis or efficiency gains render primary factors redundant. Furthermore, we demonstrate that the interaction between productivity shocks---the sign of \textit{synergism}---is uniquely determined by the substitution elasticity $\sigma$. Our findings transform industrial policy into an \textit{inverse problem} of network topology: we provide a rigorous justification for why an inelastic network necessitates selective concentration on bottleneck sectors, whereas an elastic network favors a diversified investment strategy.
\keywords{
Productivity propagation \and
Substitution elasticity \and
Network transformation \and
Singularities \and
Synergisms
}
\JEL{C67 \and D51 \and O41}
\end{abstract}

\clearpage
\section{Introduction}

Suppose that all industrial sectors of an economy reveal constant returns to scale. In such a setting, the unit cost of each sector depends uniquely on factor prices and its own productivity level. According to the nonsubstitution theorem \citep{koopmans}, equilibrium prices in such an economy are independent of the scale of production, implying that changes in final consumption do not induce technical substitution. However, this invariance does not extend to productivity fluctuations. As sectors are interconnected through an input--output network, sectoral productivity shocks propagate across the entire system, determining general equilibrium prices  and endogenously reshaping the production technology. The aggregation of these sectoral shocks into a measure of aggregate performance is known as Domar aggregation.

According to \citet{hultenRES1978}, this aggregation is essentially linear: sectoral productivity growth is mapped into aggregate growth using static Domar weights. However, this linearity holds only in a Cobb--Douglas economy ($\sigma=1$), where monetary input--output coefficients remain invariant to shocks. While recent influential studies \citep[e.g.,][]{aceECTA2012, aceAER2017, aceECTA2020} have exploited this linear framework to explain aggregate fluctuations through the ``granularity'' of the network \citep{gabaixECTA2011}, the assumption of neutral substitution limits our understanding of structural transformation. When substitution is non-neutral ($\sigma \neq 1$), the production network becomes an evolving entity; productivity shocks trigger changes in relative prices, which in turn endogenously reshape the monetary input--output linkages through factor substitution.

This transformation renders Domar aggregation inherently nonlinear. While \citet{baqaeeECTA2019} addressed this by using a second-order approximation to capture local nonlinearities, such local methods may overlook the global behavior of the system, particularly when the economy approaches structural \textit{singularities}. In contrast, \citet{nnMD2024} utilized industry-specific estimated elasticities to directly solve the general equilibrium price system through recursive mapping, demonstrating that differences in aggregate volatility are remarkably well-explained by sectoral elasticity profiles. 

The present study advances this line of inquiry by providing a rigorous theoretical foundation for nonlinear Domar aggregation. We move beyond both local approximations and numerical recursive methods to explore the global algebraic properties of the production network using a universal substitution elasticity $\sigma$ and \textit{transcendent variables}. This framework allows us to linearize the price system in transcendent space, enabling an analytical evaluation of the network's \textit{metabolic viability}.

Our contribution is twofold. First, we identify the emergence of network singularities. We show that an inelastic network ($\sigma < 1$) can encounter a ``negative singularity,'' where prices diverge as productivity declines---a state of supply-chain paralysis. Conversely, an elastic network ($\sigma > 1$) can approach a ``positive singularity'' where prices vanish, reflecting a regime where the primary factor is displaced by the self-improving efficiency of intermediate linkages. While \citet{MoranBouchaud2019} identify the stability limits of large economies through random matrices, we uncover the specific aggregation process and internal structural transformation that drive the system toward such limits.

Second, we provide a strategic guide for industrial policy by theorizing \textit{synergism}. We argue that the optimal innovation portfolio---whether to adopt a selective concentration policy or a diversified investment strategy---is an \textit{inverse problem} of the network's topology. By explicitly considering the granularity of the production network \citep{gabaixECTA2011}, we show that the sign of synergism determines the effectiveness of resource allocation. If the network reveals positive synergism ($\sigma > 1$), simultaneous innovations amplify each other, making diversified investment optimal. If it reveals negative synergism ($\sigma < 1$), innovations counteract one another, suggesting that a selective policy focusing on bottleneck sectors is more effective for social welfare.

The rest of the paper is organized as follows. Section 2 develops the general equilibrium system and derives the nonlinear Domar aggregation function. Section 3 analyzes network singularities and the conditions for metabolic viability. Section 4 explores synergism and its implications for industrial policy. Section 5 concludes the paper.

\section{The Model}

In this section, we delineate the formal structure of our general equilibrium system. We consider an economy consisting of $n$ industrial sectors, each characterized by constant returns to scale. The pivotal departure from the conventional literature is our treatment of substitution elasticities as non-neutral, which fundamentally transforms the production network's topology in response to productivity fluctuations.

\subsection{Household and Aggregate Output}
The representative household's utility is governed by a Cobb--Douglas function, with the corresponding indirect utility $V(\boldsymbol{c}(\boldsymbol{p}, B)) = B \prod_{i=1}^n (p_i)^{-\mu_i}$, where $B$ denotes the nominal GDP. By normalizing the reference prices to unity, the aggregate output growth is evaluated as:
\begin{align}
\log V = - \sum_{i=1}^n \mu_i \log p_i = -  (\log \boldsymbol{p})\boldsymbol{\mu}
\label{gdpgrowth}
\end{align}
where $\boldsymbol{\mu}$ denotes the expenditure share vector. This formulation ensures that aggregate performance is intrinsically tied to the general equilibrium price vector $\boldsymbol{p}$.

\subsection{Production and Linearization in Transcendent Space}
Consider the CES production of constant returns to scale and the corresponding unit cost function for the $j$th industry:
\begin{align}
F(\boldsymbol{x}; {z}) = {z} \left( 
\sum_{i=0}^n ({\alpha}_{i})^{\frac{1}{\sigma}}({x}_{i})^{\frac{\sigma-1}{\sigma}}
\right)^{\frac{\sigma}{\sigma-1}},
&&
C(\boldsymbol{p}; {z}) = {z}^{-1} \left( 
\sum_{i=0}^n {\alpha}_{i} ({p}_{i})^{1-\sigma}
 \right)^{\frac{1}{1-\sigma}}
 \label{base_revised}
\end{align}
where $\sigma$ denotes the universal elasticity of substitution and ${z} > 0$ is the productivity level. Under non-neutrality ($\sigma \neq 1$), the zero-profit condition $p = C(\boldsymbol{p}; z)$ represents a non-linear relationship in the price manifold. To resolve this, we map the system into a \textit{transcendent space} by defining the transcendent price $\pi = p^{1-\sigma}$ and the transcendent productivity ${\zeta} = {z}^{\sigma-1}$. The zero-profit condition then becomes linear in terms of these transcendent variables:
\begin{align}
\pi = {\zeta} \left( \alpha_0 + \alpha_1 \pi_1 + \cdots + \alpha_n \pi_n \right)
\label{linear_trans}
\end{align}
The general equilibrium system for the multisector economy is concisely described as:
\begin{align}
\boldsymbol{\pi} = \boldsymbol{\alpha}_0 \langle \boldsymbol{\zeta} \rangle + \boldsymbol{\pi} \langle \boldsymbol{\zeta} \rangle \mathbf{A}
\label{ge_matrix}
\end{align}
The algebraic solution, $\boldsymbol{\pi} = \boldsymbol{\alpha}_0 \langle \boldsymbol{\zeta} \rangle [ \mathbf{I} - \langle \boldsymbol{\zeta} \rangle \mathbf{A} ]^{-1}$, reveals that the monetary production network is an evolving entity, where the input--output coefficients $s_{ij} = \alpha_{ij} \zeta_j \pi_i / \pi_j$ are endogenously determined by the state of productivity.

\subsection{Nonlinear Domar Aggregation}

The Domar aggregation mapping determines how sectoral productivity shocks are transmitted to aggregate output. In the benchmark case of a Cobb--Douglas economy ($\sigma = 1$), substitution remains neutral, and the relationship is governed by the classical linear Domar aggregation (as derived in \ref{apd1}):
\begin{align}
\log V = (\log \boldsymbol{z}) [ \mathbf{I} - \mathbf{A} ]^{-1} \boldsymbol{\mu}
\label{domar_linear}
\end{align}
In this linear regime, the production network remains static, and the aggregate growth is simply the sum of sectoral shocks weighted by their invariant Domar weights (Leontief inverse coefficients).

However, in our framework where $\sigma \neq 1$, the substitution is no longer neutral, and the production network itself endogenously reorganizes in response to shocks. This leads to a fundamental departure from Eq. (\ref{domar_linear}). Utilizing our transcendent variables, we derive the **Nonlinear Domar Aggregation** function:
\begin{align}
\log V = \frac{1}{\sigma-1} \log \left( \boldsymbol{\alpha}_0 \langle e^{(\sigma-1)\log \boldsymbol{z}} \rangle \left[ \mathbf{I} - \langle e^{(\sigma-1)\log \boldsymbol{z}} \rangle \mathbf{A} \right]^{-1} \right) \boldsymbol{\mu} 
\label{domar_ces_final}
\end{align}
By contrasting Eq. (\ref{domar_ces_final}) with its linear counterpart (\ref{domar_linear}), it becomes evident that the nonlinearity arises from the factor $\langle e^{(\sigma-1)\log \boldsymbol{z}} \rangle=\langle \boldsymbol{z}^{\sigma - 1} \rangle = \langle \boldsymbol{\zeta} \rangle$, which transforms the topology of the input--output matrix. 

This nonlinear formulation encapsulates the fragility and synergies of the production network. Unlike the linear case, where sectoral shocks interact additively, our formulation reveals that shocks can amplify or stifle one another through the structural transformation of intermediate linkages. This nonlinearity is the root cause of the supply-chain paralysis (negative singularity) and the systemic amplification (positive synergism) that we analyze in the subsequent sections.

\section{Singularities}
It is important to note that the emergence of these singularities represents a departure from the standard existence proofs of general equilibrium, such as those in the Arrow--Debreu framework. In the conventional theory, the existence of equilibrium is guaranteed by mapping the price system onto a compact set. Our approach, however, does not impose such compactness constraints on the price vector $\boldsymbol{p}$. By allowing prices to diverge as they follow the non-linear mapping $p = C(\boldsymbol{p}; z)$, we are able to identify the exact thresholds where the economy's structural integrity fails---a state we characterize as the limit of its {metabolic viability}.

The equilibrium existence in our framework reflects the economy's {metabolic viability}. A singularity occurs when the transcendent network $\langle \boldsymbol{\zeta} \rangle \mathbf{A}$ ceases to admit a positive finite solution.

\subsection{Two-sector Analysis and Physical Coefficients}

To illuminate the mechanics of structural collapse, we first consider a two-sector representation of the transcendent system. This simplification allows us to observe the interaction between sectors without loss of generality. The equilibrium transcendent prices are governed by the following system of equations:
\begin{align} \label{twosec}
\begin{aligned}
\pi_1 &= \zeta_1 \left( \alpha_{01} + \alpha_{21} \pi_2 \right) \\
\pi_2 &= \zeta_2 \left( \alpha_{02} + \alpha_{12} \pi_1 \right) 
\end{aligned}
\end{align}
where $\zeta_j = z_j^{\sigma-1}$ denotes the transcendent productivity of each sector. By solving this system, we find that the transcendent prices are determined by the denominator $D = 1 - \alpha_{21}\alpha_{12}\zeta_1\zeta_2$. 

As $D \to 0$, the transcendent prices diverge, implying a fundamental phase transition in the original price manifold. Specifically, the behavior of normal prices $\boldsymbol{p} = \boldsymbol{\pi}^{1/(1-\sigma)}$ is characterized by the following limits:
\begin{align}
\boldsymbol{p} \to
\begin{cases}
~{\infty} & (\sigma < 1, \boldsymbol{z} < \boldsymbol{1} ) \\
~{0} & (\sigma > 1, \boldsymbol{z} > \boldsymbol{1} )
\end{cases}
\label{singular_p_final}
\end{align}

The physical implications of this singularity are most evident in the labor input coefficients $m_{0j}$. From Shephard's lemma, the physical requirement of the primary factor per unit of output is $m_{0j} = \alpha_{0j} z_j^{\sigma-1} p_j^\sigma$. Under the approach to singularity, we observe:
\begin{align}
\boldsymbol{m}_0 \to
\begin{cases}
~{\infty} & (\boldsymbol{p} \to {\infty}) \\
~{0} & (\boldsymbol{p} \to {0})
\end{cases}
\label{laborsingular_final}
\end{align}
Thus, a productivity decline in an inelastic network ($\sigma < 1$) induces a \textit{negative singularity}, where an infinite amount of labor is required to sustain production---a state of supply-chain paralysis. Conversely, an elastic network ($\sigma > 1$) approaching a \textit{positive singularity} ($p \to 0$) represents a regime where the primary factor is entirely displaced by the self-improving efficiency of intermediate linkages.

\subsection{Multisector Viability and Spectral Analysis}

The structural integrity of the multisector economy depends on whether the transcendent network $\langle \boldsymbol{\zeta} \rangle \mathbf{A}$ remains within the bounds of metabolic viability. To study this matter formally, we first define the viability of the network through the row and column properties of its associated Leontief matrix. In our non-neutral substitution framework, these definitions represent the two pillars of economic existence: productive sustainability and price soundness.

\begin{dfn}[Column viability]
A network, i.e., an $n \times n$ matrix $\mathbf{W}$, is said to be column viable if, for any column vector $\mathbf{f} > 0$, there exists a column vector $\mathbf{y}>0$ such that $\mathbf{y} = \mathbf{f} + \mathbf{W} \mathbf{y}$. This condition ensures that the system can satisfy any strictly positive final demand through its internal intermediate flows.
\end{dfn}

\begin{dfn}[Row viability]
A network, i.e., an $n \times n$ matrix $\mathbf{W}$, is said to be row viable if, for any row vector $\mathbf{v} > 0$, there exists a row vector $\mathbf{q}>0$ such that $\mathbf{q} = \mathbf{v} + \mathbf{q}\mathbf{W}$. This condition guarantees the existence of a price system that allows all sectors to cover their costs.
\end{dfn}

\begin{asp}
The reference network $\mathbf{A}$ is nondefective, meaning that all $n$ nonzero eigenvalues are distinct; thus, it is diagonalizable.
\end{asp}

By virtue of Lemmas \ref{lemma1} through \ref{lemma4} (see \ref{apd4}), the reference network $\mathbf{A}$ is both row and column viable, implying its convergence to a null matrix in the limit: $\mathbf{A}^\infty = 0$. Under the nondefective assumption, $\mathbf{A}$ is diagonalizable as:
\begin{align}
\mathbf{A} = \mathbf{Q} \langle \boldsymbol{\lambda} \rangle \mathbf{Q}^{-1} > 0
\label{diag}
\end{align}
where $\boldsymbol{\lambda}=(\lambda_1, \cdots, \lambda_n)$ denotes the set of nonzero eigenvalues and $\mathbf{Q}$ is an invertible matrix. Since $\mathbf{A}^\infty = \mathbf{Q} \langle \boldsymbol{\lambda} \rangle^\infty \mathbf{Q}^{-1} = 0$, the spectral radius of the reference state is strictly bounded:
\begin{align}
\left| \lambda_i \right| < 1 && i=1, \cdots, n
\label{eigen}
\end{align}

The viability of the transcendent network $\langle \boldsymbol{\zeta} \rangle \mathbf{A}$---and the existence of a general equilibrium solution to Eq. (\ref{ge_matrix})---hinges on whether the spectral radius remains below unity under the pressure of productivity shocks. Applying Lemma \ref{lemma4}, we evaluate the transformed network by leveraging the diagonalization in (\ref{diag}):
\begin{align*}
\langle \boldsymbol{{\zeta}} \rangle \mathbf{A} 
=
\langle \boldsymbol{{\zeta}} \rangle 
\mathbf{Q} 
\langle \boldsymbol{\lambda} \rangle 
\mathbf{Q}^{-1}
&<
{{\zeta}}_{\textit{max}} 
\mathbf{Q} 
\langle \boldsymbol{\lambda} \rangle 
\mathbf{Q}^{-1} 
=
\mathbf{Q} 
\langle {{\zeta}}_{\textit{max}} \boldsymbol{\lambda} \rangle 
\mathbf{Q}^{-1} 
\\
&> \zeta_{\textit{min}} 
\mathbf{Q} 
\langle \boldsymbol{\lambda} \rangle 
\mathbf{Q}^{-1} 
=
\mathbf{Q} 
\langle \zeta_{\textit{min}} \boldsymbol{\lambda} \rangle 
\mathbf{Q}^{-1} 
\end{align*}
where $\zeta_{\textit{max}}$ and $\zeta_{\textit{min}}$ respectively denote the upper and lower bounds of the transcendent productivity shocks. The asymptotic behavior of the system is then dictated by:
\begin{align}
 \langle \zeta_{\textit{max}}  \boldsymbol{\lambda} \rangle^\infty
> \mathbf{Q}^{-1}
\left(
\langle \boldsymbol{\zeta} \rangle \mathbf{A}
\right)^\infty
\mathbf{Q}
> 
 \langle \zeta_{\textit{min}}  \boldsymbol{\lambda} \rangle^\infty
> 0
\label{eval}
\end{align}

From (\ref{eigen}) and (\ref{eval}), we establish the following boundary conditions for the row viability of the transcendent network:
\begin{lemma} \label{p1}
Row viability holds for $\langle \boldsymbol{\zeta} \rangle \mathbf{A}$ if $\zeta_{\textit{max}} < 1/\left| \lambda_{i} \right|$ for all $i = 1, \cdots, n$. 
\end{lemma}
\begin{lemma} \label{p2}
The row viability of $\langle \boldsymbol{\zeta} \rangle \mathbf{A}$ is violated if $\zeta_{\textit{min}} > 1/\left| \lambda_{i} \right|$ for all $i = 1, \cdots, n$. 
\end{lemma}

A violation of row viability signifies that $\mathbf{W}^\infty$ diverges, implying that the feedback loops of factor substitution can no longer sustain a finite price system:
\begin{align*}
\mathbf{q} = \mathbf{v} + \mathbf{q}\mathbf{W} = \mathbf{v} \left(
\mathbf{I} + \mathbf{W} + \mathbf{W}^2 + \cdots + \mathbf{W}^\infty
\right)  \to \infty
\end{align*}
Consequently, the economy approaches a transcendent network singularity where $\boldsymbol{\pi} = \boldsymbol{p}^{1-\sigma} \to \infty$. Depending on the elasticity of substitution $\sigma$, this mathematical singularity translates into two distinct economic regimes:

\begin{propo}[Negative singularity]
In an inelastic network ($\sigma < 1$), any productivity increase ($\boldsymbol{z} > 1$) leads to a finite and manageable commodity price ($\boldsymbol{p} \ll \infty$). However, a decline in productivity ($\boldsymbol{z} < 1$) can drive the system toward a state of supply-chain paralysis where commodity prices diverge infinitely ($\boldsymbol{p} \to \infty$).
\end{propo}

\begin{propo}[Positive singularity]
In an elastic network ($\sigma > 1$), a decline in productivity ($\boldsymbol{z} < 1$) results in a finite price system. In contrast, a productivity increase ($\boldsymbol{z} > 1$) can propel the economy toward a singularity where prices vanish ($\boldsymbol{p} \to 0$), signifying a regime where the primary factor becomes entirely redundant.
\end{propo}

These transitions characterize the structural phase transitions discussed in \citet{MoranBouchaud2019}, yet we emphasize that such instabilities are the endogenous result of the non-neutrality of Domar aggregation.

\section{Synergism}

In this section, we present the core analytical finding of this study: the emergence of synergism within the production network. We demonstrate that the interaction between multiple productivity shocks is not merely additive but is endogenously amplified or stifled by the network's structural transformation. This discovery allows us to resolve the long-standing debate on industrial policy through the lens of substitution elasticity.

\subsection{Two-sector model: The Algebraic Foundations of Synergy}

Consider the two-sector model introduced in (\ref{twosec}), rewritten here in transcendent mode to reveal its underlying linear structure:
\begin{align} \label{twosec2_final}
\begin{aligned}
\pi_1&= \delta \left( {\alpha}_{01} + {\alpha}_{21} \pi_2 \right) \\
\pi_2 &= \varepsilon \left( {\alpha}_{02} + {\alpha}_{12} \pi_1 \right) 
\end{aligned}
\end{align}
where we utilize $(\delta, \varepsilon)$ instead of $(\zeta_1, \zeta_2)$ for convenience. We denote the equilibrium solution as a function of these exogenous variables, $(\pi_1(\delta, \varepsilon), \pi_2(\delta, \varepsilon))$. In the reference state $(\delta, \varepsilon) = (1, 1)$, all prices are standardized such that $(\pi_1(1, 1), \pi_2(1, 1)) = (1, 1)$. 

We consider transcendental productivity changes in the same direction, i.e., $(\delta, \varepsilon) \gtrless (1, 1)$. 
This orientation corresponds to the physical productivity shocks $(z_1, z_2)$ through the universal elasticity of substitution $\sigma$:
\begin{align} \label{dir_final}
(\delta, \varepsilon) 
= ((z_1)^{\sigma -1}, (z_2)^{\sigma -1})
\gtrless (1, 1) 
\iff
\begin{cases}
~ (z_1, z_2) \lessgtr (1, 1)      \qquad 1-\sigma > 0 \\
~ (z_1, z_2) \gtrless (1, 1)      \qquad 1-\sigma < 0
\end{cases}
\end{align}

Based on the fixed-point solutions of the transcendent system, the following expressions are obtained for the joint and individual shock scenarios:
\begin{align*}
\left( \pi_1{(\delta, \varepsilon)}, \pi_2{(\delta, \varepsilon)} \right)
&=
\left(
\frac{{\alpha}_{01} \delta + {\alpha}_{21} {\alpha}_{02} \delta \varepsilon}{1-{\alpha}_{21}{\alpha}_{12}\delta \varepsilon}
,\ 
\frac{{\alpha}_{02}\varepsilon + {\alpha}_{12}{\alpha}_{01}\delta \varepsilon}{1-{\alpha}_{21}{\alpha}_{12}\delta \varepsilon}
\right) \\
\left( \pi_1{(\delta, 1)}, \pi_2{(\delta, 1)} \right)
&=
\left(
\frac{{\alpha}_{01} \delta + {\alpha}_{21} {\alpha}_{02} \delta}{1-{\alpha}_{21}{\alpha}_{12}\delta}
,\ 
\frac{{\alpha}_{02} + {\alpha}_{12}{\alpha}_{01}\delta }{1-{\alpha}_{21}{\alpha}_{12}\delta}
\right) \\
\left( \pi_1{(1, \varepsilon)}, \pi_2{(1, \varepsilon)} \right)
&=
\left(
\frac{{\alpha}_{01} + {\alpha}_{21} {\alpha}_{02} \varepsilon}{1-{\alpha}_{21}{\alpha}_{12} \varepsilon}
,\ 
\frac{{\alpha}_{02}\varepsilon + {\alpha}_{12}{\alpha}_{01} \varepsilon}{1-{\alpha}_{21}{\alpha}_{12}\varepsilon}
\right)
\end{align*}

The breakthrough of our analysis lies in the derivation of the interaction identity. After an exhaustive algebraic evaluation, we obtain:
\begin{align}
\begin{aligned}
{\pi}_1{(\delta, \varepsilon)} - {\pi}_1{(\delta, 1)} {\pi}_1{(1, \varepsilon)}
=\frac
{(1-\delta)(1-\varepsilon)({\alpha}_{01} + {\alpha}_{12}{\alpha}_{02}\varepsilon){\alpha}_{12}{\alpha}_{21} \delta}
{(1 - {\alpha}_{12} {\alpha}_{21} \delta\varepsilon )(1 - {\alpha}_{12} {\alpha}_{21} \delta)(1 - {\alpha}_{12} {\alpha}_{21} \varepsilon )} \\
{\pi}_2{(\delta, \varepsilon)} - {\pi}_2{(\delta, 1)} {\pi}_2{(1, \varepsilon)}
=\frac
{(1-\delta)(1- \varepsilon)({\alpha}_{02} + {\alpha}_{21}{\alpha}_{01}\delta ){\alpha}_{12}{\alpha}_{21}\varepsilon}
{(1 - {\alpha}_{12} {\alpha}_{21} \delta \varepsilon )(1 - {\alpha}_{12} {\alpha}_{21} \delta )(1 - {\alpha}_{12} {\alpha}_{21} \varepsilon )}
\end{aligned}
\label{diff12_final}
\end{align}

Crucially, the determinacy of the signs of these expressions relies on the \text{metabolic viability} of the network. As established in Section 3, for a general equilibrium to exist within the viable regime, the transcendent network must remain non-singular. This condition ensures that the determinants in the denominators---$(1 - \alpha_{12}\alpha_{21}\delta\varepsilon)$, $(1 - \alpha_{12}\alpha_{21}\delta)$, and $(1 - \alpha_{12}\alpha_{21}\varepsilon)$---as well as the term $(1 - \alpha_{12}\alpha_{21})$, are all strictly positive. 

Under this viability requirement, the right-hand sides of (\ref{diff12_final}) are strictly positive as long as productivity changes shift in the same direction, $(\delta, \varepsilon) \gtrless (1, 1)$. We extend this logic to successive shocks within a single sector $(\delta, \delta^\prime)$:
\begin{align}
\begin{aligned}
{\pi}_1{(\delta\delta^\prime,1 )} - {\pi}_1{(\delta, 1)} {\pi}_1{(\delta^\prime, 1)}
=\frac
{(1-\delta)(1-\delta^\prime)\delta\delta^\prime (1 - \alpha_{12}\alpha_{21})\alpha_{12}\alpha_{21} }
{(1 - {\alpha}_{12} {\alpha}_{21} \delta\delta^\prime )(1 - {\alpha}_{12} {\alpha}_{21} \delta)(1 - {\alpha}_{12} {\alpha}_{21} \delta^\prime )} \\
{\pi}_2{(\delta\delta^\prime,1 )} - {\pi}_2{(\delta, 1)} {\pi}_2{(\delta^\prime, 1)}
=\frac
{(1-\delta)(1- \delta^\prime)({\alpha}_{02} + \delta \delta^\prime \alpha_{01} \alpha_{21}  (\alpha_{12})^2 )\alpha_{12} }
{(1 - {\alpha}_{12} {\alpha}_{21} \delta \delta^\prime )(1 - {\alpha}_{12} {\alpha}_{21} \delta )(1 - {\alpha}_{12} {\alpha}_{21} \delta^\prime )}
\end{aligned}
\label{diff21_final}
\end{align}
The right-hand sides remain positive for $(\delta, \delta^\prime) \gtrless (1, 1)$. These identities imply that $\pi_k(\delta, \varepsilon) > \pi_k(\delta, 1)\pi_k(1, \varepsilon)$, which is equivalent to $(p_k(\delta, \varepsilon))^{1-\sigma} > (p_k(\delta, 1) p_k(1, \varepsilon))^{1-\sigma}$ for $k=1,2$. Consequently, price adjustments are evaluated as:
\begin{align*}
\begin{aligned}
\log p_1(\delta, \varepsilon) &\gtrless  \log p_1(\delta, 1)  + \log p_1(1, \varepsilon)  \\
\log p_2(\delta, \varepsilon) &\gtrless  \log p_2(\delta, 1)  + \log p_2(1, \varepsilon)  
\end{aligned}
&& \text{for } \sigma \lessgtr 1
\end{align*}

Finally, applying the aggregate growth formula (\ref{gdpgrowth}), we obtain the fundamental evaluation:
\begin{align} \label{evaltwo_final}
\log V(\delta, \varepsilon) \lessgtr \log V (\delta, 1) + \log V(1, \varepsilon) && \text{for } \sigma \lessgtr 1
\end{align}
The nature of synergism is thus endogenously determined by the substitution elasticity of the production network.

\begin{propo}[Negative synergism] \label{ns_final}
In an inelastic network ($\sigma < 1$), the aggregate output pertaining to simultaneous productivity increases (or decreases) is strictly less than the sum of aggregate outputs pertaining to individual shocks. Innovations stifle each other through the rigidity of the network.
\end{propo}

\begin{propo}[Positive synergism] \label{ps_final}
In an elastic network ($\sigma > 1$), the aggregate output pertaining to simultaneous productivity increases (or decreases) is strictly greater than the sum of aggregate outputs pertaining to individual shocks. Shocks amplify each other through the endogenous transformation of the network.
\end{propo}

\subsection{Multisector Generalization: The Invariance of Synergism}

To demonstrate that the synergism identified in the two-sector case is a universal property of the production network, we now extend our analysis to an $n$-sector economy. We consider a general system where transcendent productivity shocks occur exclusively in the first and second sectors, while the remaining sectors maintain constant productivity:
\begin{align} \label{multi2_final}
\begin{aligned}
{\pi}_1 &= \delta \left( {\alpha}_{01} + {\alpha}_{11} {\pi}_1 + {\alpha}_{21} {\pi}_2 + {\alpha}_{31} {\pi}_3 +\cdots + {\alpha}_{{n}1}{\pi}_n \right) \\
{\pi}_2 &= \varepsilon \left( {\alpha}_{02} + {\alpha}_{12} {\pi}_1 + {\alpha}_{22} {\pi}_2 + {\alpha}_{32} {\pi}_3 +\cdots + {\alpha}_{{n}2}{\pi}_n \right) \\
{\pi}_3 &= 1 \left( {\alpha}_{03} + {\alpha}_{13} {\pi}_1 + {\alpha}_{23} {\pi}_2 + {\alpha}_{33} {\pi}_3 +\cdots + {\alpha}_{{n}3}{\pi}_n \right) \\
\vdots& \\
{\pi}_n &= 1 \left( {\alpha}_{0n} + {\alpha}_{1n} {\pi}_1 + {\alpha}_{2n} {\pi}_2 + {\alpha}_{3n} {\pi}_3 +\cdots + {\alpha}_{nn}{\pi}_n \right)
\end{aligned}
\end{align}

Through notational economy, we partition this system into the primary sectors of interest and the residual network:
\begin{align}
{\pi}_1 &= \delta \left( {\alpha}_{01} + [{\pi}_1, {\pi}_2] \mathbf{A}_{(1, 2)1} + \boldsymbol{\pi}_r \mathbf{A}_{r1} \right) \label{one_final} \\
{\pi}_2 &= \varepsilon \left( {\alpha}_{02} + [{\pi}_1, {\pi}_2] \mathbf{A}_{(1, 2)2} + \boldsymbol{\pi}_r \mathbf{A}_{r2} \right) \label{two_final} \\
\boldsymbol{\pi}_r &= \mathbf{I} \left( \boldsymbol{\alpha}_{0r} + [{\pi}_1, {\pi}_2] \mathbf{A}_{(1, 2)r} + \boldsymbol{\pi}_r \mathbf{A}_{rr} \right) \label{last_final}
\end{align}
where $\boldsymbol{\pi}_r$ and $\boldsymbol{\alpha}_{0r}$ are $1 \times (n-2)$ row vectors, $\mathbf{A}_{(1,2)r}$ is a $2 \times (n-2)$ matrix, and $\mathbf{A}_{rr}$ represents the $(n-2)\times (n-2)$ internal linkages of the residual network. The system (\ref{last_final}) can be solved as:
\begin{align}
\boldsymbol{\pi}_r = \left( \boldsymbol{\alpha}_{0r} + [{\pi}_1, {\pi}_2] \mathbf{A}_{(1, 2)r} \right) \left[ \mathbf{I} - \mathbf{A}_{rr} \right]^{-1} 
\label{remaining_final}
\end{align}

Substituting (\ref{remaining_final}) into (\ref{one_final}) and (\ref{two_final}), we find that the $n$-sector system collapses into a reduced two-sector system equivalent to (\ref{twosec2_final}):
\begin{align*}
{\pi}_1 &= \delta \left( \hat{\alpha}_{01} + \hat{\alpha}_{11} {\pi}_1 +\hat{\alpha}_{21} {\pi}_2 \right) ={\delta} \left( \tilde{\alpha}_{01} + \tilde{\alpha}_{21} \pi_2 \right) \\
{\pi}_2 &= \varepsilon \left( \hat{\alpha}_{02} + \hat{\alpha}_{12} {\pi}_1 + \hat{\alpha}_{22} {\pi}_2 \right) =\varepsilon \left( \tilde{\alpha}_{02} + \tilde{\alpha}_{12} \pi_1 \right)
\end{align*}
This reduction confirms that for sectors $k=1,2$, the interaction identity holds:
\begin{align}
\log {\pi}_k{(\delta ,\varepsilon)} \geq \log {\pi}_k{(\delta, 1)} + \log {\pi}_k{(1, \varepsilon)}
\label{key44_final}
\end{align}

For the remaining sectors $r=3, \dots, n$, we examine the response under the assumption of infinitesimal transcendental shocks, i.e., $(\log \delta, \log \varepsilon) \approx (0, 0)$. In this neighborhood, $\pi_i \approx 1$ for all $i$, allowing us to linearize the relation (\ref{remaining_final}) as $\pi \approx 1 + \log \pi$:
\begin{align}
\log{\pi}_r = \hat{\alpha}_{1r}\log{\pi}_1 + \hat{\alpha}_{2r}\log{\pi}_2 
\label{approx2_final}
\end{align}
where we utilize the property that $\hat{\alpha}_{0r} + \hat{\alpha}_{1r} + \hat{\alpha}_{2r} = 1$. By combining (\ref{key44_final}) and (\ref{approx2_final}), we obtain:
\begin{align} 
\log \pi_r (\delta, \varepsilon) &= \hat{\alpha}_{1r}\log{\pi}_1(\delta, \varepsilon) + \hat{\alpha}_{2r}\log{\pi}_2 (\delta, \varepsilon) \notag \\ 
&> \hat{\alpha}_{1r} \left( \log{\pi}_1(\delta, 1) + \log{\pi}_1(1, \varepsilon) \right) + \hat{\alpha}_{2r} \left( \log{\pi}_2(\delta, 1) + \log{\pi}_2(1, \varepsilon) \right) \notag \\
&= \log \pi_r (\delta, 1) + \log \pi_r (1, \varepsilon) 
\label{key55_final}
\end{align}

It follows that for any infinitesimal shock $(\log \delta, \log \varepsilon) \gtrless (0, 0)$, the macro-level impact is determined as:
\begin{align*}
\log \boldsymbol{p} (\delta, \varepsilon) \gtrless \log \boldsymbol{p} (\delta, 1) + \log \boldsymbol{p} (1, \varepsilon) && \text{for } \sigma \lessgtr 1
\end{align*}
Applying the aggregate growth formula (\ref{gdpgrowth}), the presence of synergism as stated in Propositions \ref{ns_final} and \ref{ps_final} is established:
\begin{align*}
\log V(\delta, \varepsilon) \lessgtr \log V(\delta, 1) + \log V(1, \varepsilon) && \text{for } \sigma \lessgtr 1
\end{align*}

Finally, we relax the infinitesimal assumption through global integration. By utilizing the identities (\ref{diff12_final}) and (\ref{diff21_final}), we can integrate successive infinitesimal changes $(\delta^\prime, \varepsilon^\prime)$ along any path in transcendent space. This confirms that the sign of synergism remains invariant to the magnitude of shocks, providing a robust theoretical foundation for strategic industrial policy:
\begin{align*}
\log V(\delta \delta^\prime, \varepsilon \varepsilon^\prime) \lessgtr \log V(\delta, 1) + \log V(\delta^\prime, 1) + \log V(1, \varepsilon) + \log V(1, \varepsilon^\prime) && \text{for } \sigma \lessgtr 1
\end{align*}

\subsection{Industrial Policy as an Inverse Problem}

The mathematical characterization of synergism (\ref{evaltwo_final}) fundamentally redefines the nature of industrial policy. Traditionally, policy design has been treated as a "picking winners" exercise, implicitly assuming a neutral Cobb--Douglas world where sectoral impacts are merely additive. Our framework, however, suggests that industrial policy should be viewed as an \textit{inverse problem} of network topology. 

In this perspective, the policymaker does not merely select sectors based on their individual merit; rather, they must first identify the structural regime of the economy---determined by the universal elasticity $\sigma$---and then derive the optimal investment portfolio that aligns with the network's endogenous synergy.

If the economy is characterized by \text{negative synergism} ($\sigma < 1$), the production network acts as a rigid constraint where simultaneous expansions in different nodes compete for limited flexibility, leading to systemic "paralysis." In such a regime, a broad-based stimulus is counterproductive. Instead, the optimal policy is one of \textit{selective concentration}. The objective is to identify and alleviate specific "bottleneck" sectors that impose the most severe shadow costs on the rest of the network.

Conversely, an economy revealing \text{positive synergism} ($\sigma > 1$) possesses a "metabolic" capacity for self-amplification. Here, simultaneous innovations trigger a reorganization of intermediate linkages that enhances aggregate efficiency beyond the sum of its parts. In this elastic regime, a \textit{diversified portfolio strategy} becomes the superior choice. By spreading innovation resources across a wide range of sectors, the policymaker leverages the network's inherent synergism to achieve a systemic takeoff.

This topological approach provides a robust rationale for why industrial policies that succeeded in an elastic, growth-oriented era might fail in a rigid, resource-constrained environment. The effectiveness of a policy is not an intrinsic property of the policy itself, but a function of its alignment with the network's transcendent state.

\section{Conclusion}

In this paper, we have developed a novel framework for analyzing general equilibrium in production networks through the lens of \textit{transcendent space}. By mapping the non-linear interactions of a CES economy into a linearized transcendent manifold, we revealed the hidden geometry of the production network---a geometry that remains invisible under the standard Cobb--Douglas approximation.

Our findings establish three pivotal pillars for modern network theory. First, the concept of metabolic viability provides a formal criterion for the existence of equilibrium, identifying the exact thresholds where the structural integrity of the economy fails.  By departing from the compactness constraints of the Arrow--Debreu framework, we have identified the limits of the price system as a state of supply-chain paralysis (negative singularity).  Second, we derived a \textit{Nonlinear Domar Aggregation} formula that captures the endogenous structural transformations of the economy, moving beyond the static weights of classical theory.  
Finally, we demonstrated that the interaction between productivity shocks---the sign of \textit{synergism}---is uniquely determined by the substitution elasticity $\sigma$.

This analytical breakthrough transforms industrial policy from a political art into a structural science. By framing resource allocation as an \textit{inverse problem} of network topology, we provide a rigorous justification for the shift between selective concentration and diversified investment strategies. In essence, this study offers more than a theoretical explanation of aggregate phenomena; it provides a \textit{methodological framework} for the ex-ante design of resilient and synergistic economic systems.

As the global economy faces increasing volatility and structural fragility, the ability to read the "transcendent" signals of the production network becomes paramount. Our framework bridges the gap between the granular origins of fluctuations identified by \citet{gabaixECTA2011} and the systemic stability limits explored by \citet{MoranBouchaud2019}. By illuminating the complex synergies that sustain the metabolic life of the modern economy, we hope this study serves as a foundation for a more engineering-oriented approach to economic policy---one that prioritizes systemic viability and structural evolution in an ever-changing world.

\appendix
\gdef\thesection{Appendix \Alph{section}}

\section{\normalfont (Derivation of the Cobb--Douglas Limit)} \label{apd1}
To demonstrate that the non-linear Domar aggregation under CES technology converges to the linear Cobb--Douglas case as $\sigma \to 1$, we evaluate the limit of the unit cost function. Taking the logarithm of $z C(\boldsymbol{p}; z)$ and applying L'Hôpital's rule as $\sigma \to 1$:
\begin{align*}
\log \left( z C(\boldsymbol{p}; z) \right) 
= \lim_{\sigma \to 1} \frac{\log \left( \sum_{i=0}^n \alpha_{i} (p_i)^{1-\sigma} \right)}{1-\sigma}
= \lim_{\sigma \to 1} \frac{\sum_{i=0}^n \alpha_{i} (p_i)^{1-\sigma} \log p_i}{\sum_{i=0}^n \alpha_{i} (p_i)^{1-\sigma}} 
= \sum_{i=0}^n \alpha_{i} \log p_{i}
\end{align*}
where $\sum_{i=0}^n \alpha_i = 1$ is used. Under the zero-profit condition $p = C(\boldsymbol{p}; z)$, this limit recovers the linear-in-logs relationship:
\begin{align*}
\log p = - \log z + \sum_{i=0}^n \alpha_i \log p_i
\end{align*}
This result confirms that the non-linear aggregate output growth in Eq. (\ref{domar_ces_final}) collapses to the linear Domar aggregation 
\begin{align*}
\log V = (\log \boldsymbol{z}) [ \mathbf{I} - \mathbf{A} ]^{-1} \boldsymbol{\mu}
\end{align*}
as the substitution elasticity becomes neutral.

\section{\normalfont (Viability Assessment of Networks)} \label{apd4}
The following lemmas build upon the foundational work of \citet{Nikaido}, \citet{domd}, and \citet{Takayama}.

\begin{lemma} \label{lemma1}
The reference Leontief matrix $\mathbf{I} - \mathbf{A}$ is nonsingular.
\end{lemma}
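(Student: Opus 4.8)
The plan is to show that the spectral radius of $\mathbf{A}$ is strictly below $1$; nonsingularity of $\mathbf{I}-\mathbf{A}$ then follows at once, since the eigenvalues of $\mathbf{I}-\mathbf{A}$ are the numbers $1-\lambda$ with $\lambda$ ranging over the eigenvalues of $\mathbf{A}$, and $\rho(\mathbf{A})<1$ forces every such $1-\lambda$ to be nonzero, whence $\det(\mathbf{I}-\mathbf{A})=\prod_i(1-\lambda_i)\neq 0$.

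The only input needed is the accounting identity $\boldsymbol{\alpha}_0+\boldsymbol{1}\mathbf{A}=\boldsymbol{1}$, i.e. $\boldsymbol{1}\mathbf{A}=\boldsymbol{1}-\boldsymbol{\alpha}_0$, together with $\mathbf{A}\geq 0$ entrywise (cost shares are nonnegative) and $\boldsymbol{\alpha}_0>0$ entrywise (each sector employs a positive amount of the primary factor). Transposing, $(\mathbf{I}-\mathbf{A}^\intercal)\boldsymbol{1}^\intercal=\boldsymbol{\alpha}_0^\intercal>0$, so every column sum of $\mathbf{A}$, equal to $\sum_{i=1}^n\alpha_{ij}=1-\alpha_{0j}$, is strictly less than $1$. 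The standard Perron--Frobenius bound for nonnegative matrices then gives $\rho(\mathbf{A})=\rho(\mathbf{A}^\intercal)\leq\max_j\sum_{i=1}^n\alpha_{ij}<1$. Equivalently, $\|\mathbf{A}^\intercal\|_\infty<1$, so the Neumann series $\sum_{k\geq 0}\mathbf{A}^k$ converges and furnishes a two-sided inverse of $\mathbf{I}-\mathbf{A}$; equivalently again, \citet{Nikaido}'s productiveness criterion applies with witness vector $\boldsymbol{1}^\intercal$. Any of these routes delivers $\rho(\mathbf{A})<1$, hence the claim.

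There is no serious obstacle here; the one point to flag is the reliance on $\boldsymbol{\alpha}_0>0$, which is a maintained hypothesis of the model (every commodity requires the primary factor in its production). If instead one assumes only $\boldsymbol{\alpha}_0\geq 0$ with $\boldsymbol{\alpha}_0\neq 0$, the conclusion still holds provided $\mathbf{A}$ is irreducible, via the strict Perron--Frobenius inequality for irreducible nonnegative matrices; I would include this remark, since the subsequent lemmas on the viability of $\mathbf{A}$, on $\mathbf{A}^\infty=0$, and on the positivity asserted in~(\ref{diag}) all rest on $\rho(\mathbf{A})<1$.
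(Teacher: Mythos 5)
Your proof is correct, but it takes a genuinely different route from the paper's. The paper argues by contradiction in the elementary dominant-diagonal style: assuming $\left[ \mathbf{I}-\mathbf{A}\right]\boldsymbol{x}=\boldsymbol{0}$ for some $\boldsymbol{x}\neq \boldsymbol{0}$, it first sums all $n$ equations and uses the column-sum inequality $\sum_{i=1}^n\alpha_{ij}=1-\alpha_{0j}<1$ to conclude that the entries of $\boldsymbol{x}$ cannot all share one sign, then sums only the equations indexed by the negative entries to force a sign contradiction. You instead pass to the spectral radius: the very same column-sum inequality gives $\rho(\mathbf{A})=\rho(\mathbf{A}^\intercal)\leq\|\mathbf{A}^\intercal\|_\infty<1$, so no eigenvalue of $\mathbf{A}$ equals $1$ and $\det(\mathbf{I}-\mathbf{A})\neq 0$. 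Both arguments consume exactly the same numerical input ($\mathbf{A}\geq 0$, $\boldsymbol{\alpha}_0>0$, and the accounting identity $\boldsymbol{\alpha}_0+\boldsymbol{1}\mathbf{A}=\boldsymbol{1}$), and your reliance on $\boldsymbol{\alpha}_0>0$ is consistent with the paper's maintained hypothesis $\alpha_{ij}>0$. The trade-off: the paper's argument is self-contained and uses nothing beyond sign bookkeeping, whereas yours imports the standard bound $\rho(\mathbf{A})\leq\|\mathbf{A}\|$ but buys considerably more --- it delivers $|\lambda_i|<1$ (the paper's (\ref{eigen})) and the convergence of the Neumann series $\sum_{k\geq 0}\mathbf{A}^k$ at once, facts the paper only reaches later through the chain of Lemmas \ref{lemma2}--\ref{lemma4} combined with the nondefectiveness Assumption; in particular, your route establishes (\ref{eigen}) without assuming diagonalizability. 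Your closing remark on relaxing $\boldsymbol{\alpha}_0>0$ to $\boldsymbol{\alpha}_0\geq 0$, $\boldsymbol{\alpha}_0\neq \boldsymbol{0}$ under irreducibility is sound, though it goes beyond what the paper requires.
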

\begin{proof}
Suppose $\mathbf{I} - \mathbf{A}$ is singular. There exists a vector $\boldsymbol{x} \neq {0}$ such that $[\mathbf{I} - \mathbf{A}]\boldsymbol{x} = {0}$. Summing the $n$ equations of this system yields:
\begin{align*}
\sum_{j=1}^n \left(1 - \sum_{i=1}^n \alpha_{ij} \right) x_j = 0
\end{align*}
Since $\sum_{i=0}^n \alpha_{ij} = 1$ and $\alpha_{0j} > 0$, the coefficients $(1 - \sum \alpha_{ij})$ are strictly positive, implying that the elements of $\boldsymbol{x}$ cannot all share the same sign. Reordering $\boldsymbol{x}$ such that $x_1, \dots, x_k < 0$ and $x_{k+1}, \dots, x_n \geq 0$, and summing the first $k$ equations, we find:
\begin{align*}
\sum_{j=1}^k \left(1 - \sum_{i=1}^k \alpha_{ij} \right) x_j - \sum_{j=k+1}^n \left( \sum_{i=1}^k \alpha_{ij} \right) x_j = 0
\end{align*}
The left-hand side is strictly negative, contradicting the right-hand side. Thus, $\mathbf{I} - \mathbf{A}$ must be nonsingular.
\end{proof}

\begin{lemma} \label{lemma2}
The reference network $\mathbf{A}$ is column viable.
\end{lemma}
\begin{proof}
Consider $[\mathbf{I} - \mathbf{A}] \boldsymbol{x} = \mathbf{f} > 0$. If any $x_j \leq 0$, reorder $\boldsymbol{x}$ as in Lemma \ref{lemma1}. Summing the first $k$ equations leads to a strictly negative left-hand side against a strictly positive right-hand side $\sum f_i$. Thus, $\boldsymbol{x} > 0$ must hold.
\end{proof}

\begin{lemma} \label{lemma3}
If $\mathbf{W}$ is column viable and $\mathbf{I} - \mathbf{W}$ is nonsingular, then $\mathbf{W}$ is row viable and $[\mathbf{I} - \mathbf{W}]^{-1} > 0$.
\end{lemma}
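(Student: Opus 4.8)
The plan is to pass to the inverse $M:=[\mathbf{I}-\mathbf{W}]^{-1}$, which exists by hypothesis, and to read column viability as a positivity property of $M$. The relation $\mathbf{y}=\mathbf{f}+\mathbf{W}\mathbf{y}$ is exactly $[\mathbf{I}-\mathbf{W}]\mathbf{y}=\mathbf{f}$, so when $\mathbf{I}-\mathbf{W}$ is nonsingular its unique solution is $\mathbf{y}=M\mathbf{f}$. Hence column viability says precisely that $M\mathbf{f}>0$ for every column vector $\mathbf{f}>0$; that is, $M$ carries the open positive orthant into itself.

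First I would upgrade this to the entrywise statement $M\ge 0$. Fix a column index $j$; for each $\epsilon>0$ the vector $\mathbf{e}_j+\epsilon\mathbf{1}$ is strictly positive, so $M(\mathbf{e}_j+\epsilon\mathbf{1})=M\mathbf{e}_j+\epsilon M\mathbf{1}>0$, and letting $\epsilon\downarrow 0$ together with the closedness of the nonnegative orthant gives $M\mathbf{e}_j\ge 0$, i.e.\ the $j$th column of $M$ is nonnegative. As $j$ is arbitrary, $M\ge 0$; moreover, since $M$ is invertible, no column (and no row) of $M$ can be identically zero.

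Row viability then follows at once: given a row vector $\mathbf{v}>0$, set $\mathbf{q}:=\mathbf{v}M$, so that $\mathbf{q}[\mathbf{I}-\mathbf{W}]=\mathbf{v}$, that is $\mathbf{q}=\mathbf{v}+\mathbf{q}\mathbf{W}$, and $q_j=\sum_i v_iM_{ij}>0$ because the $j$th column of $M$ is nonnegative and not all zero. For the remaining claim $M>0$ I would use the identity $M=\mathbf{I}+\mathbf{W}M$ (valid because $M$ is a two-sided inverse of $\mathbf{I}-\mathbf{W}$): with $\mathbf{W}\ge 0$ and $M\ge 0$ this forces $M\ge\mathbf{I}$, hence $\mathbf{W}M\ge\mathbf{W}$ and therefore $M=\mathbf{I}+\mathbf{W}M\ge\mathbf{I}+\mathbf{W}$; since the networks to which the lemma is applied ($\mathbf{A}$ and $\langle\boldsymbol{\zeta}\rangle\mathbf{A}$) have strictly positive off-diagonal entries while the $\mathbf{I}$-term handles the diagonal, $M>0$ entrywise.

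The step I expect to be the main obstacle is the passage from ``strictly positive inputs yield strictly positive outputs'' to $M\ge 0$: one cannot simply substitute the basis vector $\mathbf{e}_j$ into the definition of column viability, since $\mathbf{e}_j$ has zero entries and is therefore not strictly positive, so the $\epsilon$-perturbation and the limiting argument are genuinely needed. The only other delicate point is that the strict conclusion $M>0$ relies on $\mathbf{W}$ being an actual nonnegative (indeed entrywise positive, off the diagonal) network rather than an arbitrary matrix with $\mathbf{I}-\mathbf{W}$ nonsingular, which is precisely the setting in which the lemma is invoked.
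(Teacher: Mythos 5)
Your proposal is correct and follows essentially the same route as the paper: read column viability as the statement that $M=[\mathbf{I}-\mathbf{W}]^{-1}$ maps the positive orthant into itself, deduce entrywise nonnegativity of $M$, and obtain row viability from $\mathbf{q}=\mathbf{v}M$. You are in fact more careful at the two points where the paper's own proof is loose: the paper tests column viability on merely nonnegative vectors (with zero entries) without the $\epsilon$-perturbation you supply, and it asserts $[\mathbf{I}-\mathbf{W}]^{-1}>0$ where its contradiction argument only yields $\ge 0$; your identity $M=\mathbf{I}+\mathbf{W}M$ combined with the positivity of the network is what actually delivers the strict inequality.
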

\begin{proof}
Let $\ell_{ij}$ be the elements of $[\mathbf{I} - \mathbf{W}]^{-1}$. If $\ell_{ij} < 0$ for some $j$, one can choose a non-negative vector $\mathbf{f}$ such that $[\mathbf{I} - \mathbf{W}]^{-1} \mathbf{f}$ yields a non-positive component, contradicting column viability. Positivity of the inverse implies row viability.
\end{proof}

\begin{lemma} \label{lemma4}
For $\mathbf{W} > 0$, row viability and $[\mathbf{I} - \mathbf{W}]^{-1} > 0$ are equivalent to $\mathbf{W}^\infty = 0$.
\end{lemma}
\begin{proof}
Define $\mathbf{L}_T = \mathbf{I} + \mathbf{W} + \mathbf{W}^2 + \cdots + \mathbf{W}^T$. 
Since $[\mathbf{I} - \mathbf{W}] \mathbf{L}_T = \mathbf{I} - \mathbf{W}^{T+1} < \mathbf{I}$, the sequence is bounded from above by $[\mathbf{I} - \mathbf{W}]^{-1}$.
Its convergence to $[\mathbf{I}-\mathbf{W}]^{-1}$ is ensured by the Perron-Frobenius theorem and the standard properties of productive matrices \citep[see][]{Nikaido, Takayama}. Hence, we arrive at the desired result:
\begin{align}
\mathbf{W}^\infty = \lim_{T \to \infty} \left( \mathbf{L}_T - \mathbf{L}_{T -1} \right) = 0 \text{}
\label{keyA_revised}
\end{align}
To show the converse, consider the expression $\mathbf{L}_T [\mathbf{I} - \mathbf{W}] = \mathbf{I} - \mathbf{W}^{T+1} = [\mathbf{I} - \mathbf{W}] \mathbf{L}_T$. Letting $T \to \infty$, Eq. (\ref{keyA_revised}) implies that $\mathbf{L}_{\infty} = [\mathbf{I} - \mathbf{W}]^{-1}$, which establishes the row viability of $\mathbf{W}$.
\end{proof}

\subsection*{Funding}
This research was funded by the Japan Science and Technology Agency (JST) Social Scenario Research Program toward a carbon-neutral society (grant number JPMJCN2302).

\subsection*{Compliance with Ethical Standards}
The authors declare that they have no conflicts of interest.

\bibliographystyle{spbasic_x} 
{\raggedright
\bibliography{bibfile}
}
\end{document}